\newcommand{\cc}[1]{\overline{#1}}
\newcommand{\abs}[1]{\left|#1\right|}
\newcommand{\norm}[1]{\left\lVert#1\right\rVert}
\newcommand{\defined}{:=}
\newcommand{\normeq}{\mathring{=}}
\renewcommand{\vec}[1]{\mathbf{#1}}
\DeclareMathOperator{\const}{const}
\newtheorem{theorem}{Theorem}
\newenvironment{proof}{\par\noindent\textbf{Proof} }{\par\hfill$\Box$\par}
\begin{document}

\title{On the solutions of generalized discrete Poisson equation}
\author{Roman Werpachowski\\Center for Theoretical Physics, PAS\\Al. Lotnik\'{o}w 32/46, 02-668 Warsaw, Poland}

\maketitle

\begin{abstract}
The set of common numerical and analytical problems is introduced in the form of the \emph{generalized multidimensional discrete Poisson equation}. It is shown that its solutions with square-summable discrete derivatives are unique up to a constant. The proof uses the Fourier transform as the main tool. The necessary condition for the existence of the solution is provided.
\end{abstract}

%\begin{keywords}
%discrete Poisson equation,uniqueness,solutions
%\end{keywords}

%\begin{classcode}
%39A12; 74G30
%\end{classcode}

\section{Introduction}

The motivation for this paper comes from an attempt to construct
(a discrete version of) the quantum field theory interacting with
a non-trivial gravitational field. Such a theory would describe a
quantum mechanical system with infinitely many degrees of freedom,
assigned to the points of an infinite lattice
$\mathbb{Z}^d$~\cite{kijowski1,kijowski2,kijowski3}. The
multidimentional discrete Poisson equation arises as a natural
tool of such a theory. In this paper we present a proof of the
uniqueness (up to an additive constant) of a class of its
solutions. The existence proof will be the subject of further research.

The equation we deal with may be derived  from the variational
principle:
\begin{equation}
\label{eq:deltaW}
\delta W(\cc{f},f) = 0 \ ,
\end{equation}
where
\[
W(\cc{f},f) \defined \sum_{\vec{n} \in \mathbb{Z}^d} \sum_{k=1}^d \sum_{l=1}^d b_{\vec{n},kl} \cc{(f_\vec{n} - f_{\vec{n} - \vec{e}_k})} (f_\vec{n} - f_{\vec{n} - \vec{e}_l}) \ .
\]
Here, $f$ is our unknown function, a complex sequence defined on the lattice
$\mathbb{Z}^d$ ($f: \mathbb{Z}^d \mapsto \mathbb{C}$), whereas
$b: \mathbb{Z}^d \mapsto \mathbb{C}^{d \times d}$ is a sequence of
$d \times d$ positive Hermitian matrices $b_\vec{n}$ whose spectra
$\sigma(b_\vec{n})$ have common bounds,
\begin{equation}
\label{eq:bbnd} \forall_{\vec{n} \in \mathbb{Z}^d} \qquad
\sigma(b_\vec{n}) \in (b_1, b_2] \ , \qquad 0 \le b_1 \le b_2 <
\infty \ .
\end{equation}
We assume that $f$ fulfills the following condition:
\begin{equation}
\label{eq:condition} \forall_{1 \le k \le d} \qquad \sum_{\vec{n}
\in \mathbb{Z}^d} \abs{f_\vec{n} - f_{\vec{n} - \vec{e}_k}}^2 <
\infty \ .
\end{equation}
This makes $W(\cc{f},f)$ finite, since
\[
b_1 \sum_{k=1}^d \abs{f_\vec{n} - f_{\vec{n} - \vec{e}_k}}^2 <
\sum_{k=1}^d \sum_{l=1}^d b_{\vec{n},kl} \cc{(f_\vec{n} -
f_{\vec{n} - \vec{e}_k})} (f_\vec{n} - f_{\vec{n} - \vec{e}_l})
\le b_2 \sum_{k=1}^d \abs{f_\vec{n} - f_{\vec{n} - \vec{e}_k}}^2 \
.
\]
Varying $W(\cc{f},f)$ over $\cc{f_\vec{n}}$, we derive the
following homogeneous equation for $f$:
\[
\sum_{k=1}^d \sum_{l=1}^d \left[ b_{\vec{n} +
\vec{e}_k,kl} ( f_{\vec{n} + \vec{e}_k} - f_{\vec{n} + \vec{e}_k -
\vec{e}_l} ) -  b_{\vec{n},kl} ( f_\vec{n} - f_{\vec{n} -
\vec{e}_l} ) \right] = 0 \ .
\]
In the present paper, we consider a
general, non-homogeneous case
\begin{equation}
\label{eq:Peq} 
\sum_{k=1}^d \sum_{l=1}^d \left[ b_{\vec{n} +
\vec{e}_k,kl} ( f_{\vec{n} + \vec{e}_k} - f_{\vec{n} + \vec{e}_k -
\vec{e}_l} ) -  b_{\vec{n},kl} ( f_\vec{n} - f_{\vec{n} -
\vec{e}_l} ) \right] = g_\vec{n} \ ,
\end{equation}
which we call the \emph{generalized multidimensional discrete
Poisson equation}. The simplest example is provided by the standard
multidimensional discrete Poisson equation, corresponding to
$b_{\vec{n},kl} = \delta_{k,l}$:
\[
\sum_{k=1}^d \left( f_{\vec{n} + \vec{e}_k} + f_{\vec{n} -
\vec{e}_k} - 2 f_\vec{n} \right) = g_\vec{n} \ .
\]
Of course, adding a constant to a solution $f$ of~\eqref{eq:Peq}
we again obtain a solution. Within the class of functions fulfilling~\eqref{eq:condition}, we prove that any two solutions of
equation~\eqref{eq:Peq} are equal up to an additive constant.

Contrary to the case of ordinary differential equations, there is
no general theorem on the existence and uniqueness of solutions of
discrete equations, neither in one nor in many dimensions. Only
partial results exist, see for example~\cite{hankerson,veit}.
In~\cite{veit}, the uniqueness of solutions vanishing at infinity
($\lim_{\norm{\vec{n}} \to \infty} f_\vec{n} = 0$) has been proved
for a wide class of multidimensional discrete equations.
Unfortunately, this is not sufficient for purposes of the quantum
field theory. Our result presented here is valid for solutions
fulfilling a different condition, namely~\eqref{eq:condition}.

Given two solutions of~\eqref{eq:Peq}, $f$ and $f'$, satisfying
condition~\eqref{eq:condition}, their difference $x_\vec{n}
\defined f_\vec{n} - f'_\vec{n}$
fulfills~\eqref{eq:condition} and solves the homogeneous equation
with $g = 0$:
\begin{equation}
\label{eq:Peq1} \sum_{k=1}^d \sum_{l=1}^d \left[ b_{\vec{n} +
\vec{e}_k,kl} ( x_{\vec{n} + \vec{e}_k} - x_{\vec{n} + \vec{e}_k -
\vec{e}_l} ) -  b_{\vec{n},kl} ( x_\vec{n} - x_{\vec{n} -
\vec{e}_l} ) \right] = 0 \ .
\end{equation}

It is, therefore, sufficient to prove that, within the class of
functions fulfilling~\eqref{eq:condition}, any solution of
~\eqref{eq:Peq1} is constant.

For the non-homogeneous equation~\eqref{eq:Peq}, the existence of
a solution depends very much upon the properties  of the
right-hand side $g$ and will be analyzed elsewhere.

% In this paper, we present only the proof of the uniqueness of solutions of equation~\eqref{eq:Peq} fulfilling the condition~\eqref{eq:condition}. The question about their existence is a separate one, and the answer to it depends on the nature of the right-hand side of~\eqref{eq:Peq}, i.e. the sequence $g$.

\section{Uniqueness theorem}

\begin{theorem}
\label{thm:homconst}
Let $x: \mathbb{Z}^d \mapsto \mathbb{C}$ be a solution of the homogeneous generalized discrete Poisson equation~\eqref{eq:Peq1} in $d$ dimensions. Let us assume that $x$ has the property~\eqref{eq:condition},
\begin{equation}
\label{eq:sumx2}
\forall_{1 \le k \le d} \qquad \sum_{\vec{n} \in \mathbb{Z}^d} \abs{x_\vec{n} - x_{\vec{n} - \vec{e}_k}}^2 < \infty \ .
\end{equation}
Then
\[
x_\vec{n} = \const \ .
\]
\end{theorem}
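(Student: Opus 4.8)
The plan is to pass to the discrete Fourier transform, but applied to the \emph{gradient} of $x$ rather than to $x$ itself --- which is exactly what condition~\eqref{eq:sumx2} makes legitimate. First I would write $y^{(k)}_\vec{n} \defined x_\vec{n} - x_{\vec{n} - \vec{e}_k}$, so that~\eqref{eq:sumx2} says precisely $y^{(k)} \in \ell^2(\mathbb{Z}^d)$ for $k = 1, \dots, d$, and set $v^{(k)}_\vec{n} \defined \sum_{l=1}^d b_{\vec{n},kl} y^{(l)}_\vec{n}$; since $\norm{b_\vec{n}} \le b_2$ by~\eqref{eq:bbnd}, the sequences $v^{(k)}$ lie in $\ell^2(\mathbb{Z}^d)$ as well. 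With these abbreviations the homogeneous equation~\eqref{eq:Peq1} is the discrete divergence condition
\[
\sum_{k=1}^d \left( v^{(k)}_{\vec{n} + \vec{e}_k} - v^{(k)}_\vec{n} \right) = 0 ,
\]
and the mere fact that the $y^{(k)}$ are discrete derivatives of one function forces the compatibility (``curl-free'') identity
\[
y^{(k)}_\vec{n} - y^{(k)}_{\vec{n} - \vec{e}_l} = y^{(l)}_\vec{n} - y^{(l)}_{\vec{n} - \vec{e}_k} ,
\]
both sides being equal to $x_\vec{n} - x_{\vec{n} - \vec{e}_k} - x_{\vec{n} - \vec{e}_l} + x_{\vec{n} - \vec{e}_k - \vec{e}_l}$.

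Next I would apply the Fourier transform $\hat{u}(\theta) = \sum_{\vec{n} \in \mathbb{Z}^d} u_\vec{n} e^{-i \vec{n} \cdot \theta}$, $\theta \in [-\pi, \pi]^d$, an isomorphism of $\ell^2(\mathbb{Z}^d)$ onto $L^2([-\pi, \pi]^d)$ obeying $\sum_\vec{n} \cc{u_\vec{n}} w_\vec{n} = (2\pi)^{-d} \int \cc{\hat{u}(\theta)} \hat{w}(\theta) \, d\theta$. The divergence condition becomes $\sum_{k=1}^d ( e^{i\theta_k} - 1 ) \hat{v}^{(k)}(\theta) = 0$ and the compatibility identity becomes $( 1 - e^{-i\theta_l} ) \hat{y}^{(k)}(\theta) = ( 1 - e^{-i\theta_k} ) \hat{y}^{(l)}(\theta)$, both for almost every $\theta$. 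Taking $l = 1$ in the latter, I would define the measurable function $h(\theta) \defined \hat{y}^{(1)}(\theta) / ( 1 - e^{-i\theta_1} )$ on the full-measure set $\{\theta : \theta_1 \ne 0\}$; then $\hat{y}^{(k)}(\theta) = ( 1 - e^{-i\theta_k} ) h(\theta)$ almost everywhere. Formally $h$ is the Fourier transform of the ``potential'' $x$, but --- and this is precisely why the theorem is not trivial --- $h$ need not lie in $L^2$, reflecting that $x$ need not be square-summable; so the argument must never integrate $\abs{h}^2$.

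The decisive step is to evaluate the ``energy'' $E \defined W(\cc{x}, x)$, i.e.
\[
E = \sum_{\vec{n} \in \mathbb{Z}^d} \sum_{k=1}^d \sum_{l=1}^d b_{\vec{n},kl} \cc{y^{(k)}_\vec{n}} y^{(l)}_\vec{n} = \sum_{k=1}^d \sum_{\vec{n} \in \mathbb{Z}^d} \cc{y^{(k)}_\vec{n}} v^{(k)}_\vec{n} ,
\]
which is a finite, nonnegative number by~\eqref{eq:sumx2} and~\eqref{eq:bbnd}. Writing $E$ as a sum of $\ell^2$ pairings is the point that neutralizes the $\vec{n}$-dependence of $b$ (which keeps the Fourier transform from diagonalizing the operator in~\eqref{eq:Peq1}): applying the Parseval identity termwise and substituting $\cc{\hat{y}^{(k)}(\theta)} = ( 1 - e^{i\theta_k} ) \cc{h(\theta)}$ yields
\[
E = -\frac{1}{(2\pi)^d} \int_{[-\pi, \pi]^d} \cc{h(\theta)} \sum_{k=1}^d ( e^{i\theta_k} - 1 ) \hat{v}^{(k)}(\theta) \, d\theta = 0 ,
\]
since the inner sum vanishes almost everywhere by the divergence condition and $\cc{h}$, finite almost everywhere, multiplies a function vanishing almost everywhere, so no $0 \cdot \infty$ ambiguity arises. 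Finally, each $b_\vec{n}$ is positive definite (its spectrum lies in $(b_1, b_2] \subset (0, \infty)$), so every term $\sum_{k,l} b_{\vec{n},kl} \cc{y^{(k)}_\vec{n}} y^{(l)}_\vec{n}$ of the convergent series $E$ of nonnegative terms is $\ge 0$; hence $E = 0$ forces each such term, and with it each $y^{(k)}_\vec{n}$, to vanish. Therefore $x_\vec{n} = x_{\vec{n} - \vec{e}_k}$ for all $\vec{n}$ and $k$, so $x$ is constant on $\mathbb{Z}^d$. I expect the only genuinely delicate point to be the handling of the non-$L^2$ potential $h$ and the justification of the final displayed identity; everything else is routine manipulation of $\ell^2$ sequences and Parseval.
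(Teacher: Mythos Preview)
Your proof is correct and follows essentially the same route as the paper: Fourier-transform the $\ell^2$ discrete gradient and the $\ell^2$ sequence $b\cdot(\text{gradient})$, combine the curl-free identity with the Fourier-side divergence equation to show the energy $\sum_{\vec{n}}\sum_{k,l} b_{\vec{n},kl}\,\cc{y^{(k)}_\vec{n}}\,y^{(l)}_\vec{n}$ vanishes (via Parseval), and conclude from positive definiteness of $b_\vec{n}$. Your explicit ``potential'' $h(\theta)=\hat{y}^{(1)}(\theta)/(1-e^{-i\theta_1})$ is exactly the multiplier $\cc{\tilde v_1(\vec s)}(1-e^{is_1})^{-1}$ the paper uses (after conjugation and a swap of the symbols $v$ and $y$), so the two arguments coincide step for step.
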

Observe that the uniqueness within the class of square-summable
functions:
\[
\sum_{\vec{n} \in \mathbb{Z}^d} \abs{x_\vec{n}}^2 < \infty \ ,
\]
follows easily from the following, standard argument. We multiply
both sides of~\eqref{eq:Peq1} by $\cc{x_\vec{n}}$ and sum over
$\vec{n} \in \mathbb{Z}^d$, obtaining
\[
\sum_{\vec{n} \in \mathbb{Z}^d} \cc{x_\vec{n}} \sum_{k=1}^d \sum_{l=1}^d \left[ b_{\vec{n} + \vec{e}_k,kl} ( x_{\vec{n} + \vec{e}_k} - x_{\vec{n} + \vec{e}_k - \vec{e}_l} ) -  b_{\vec{n},kl} ( x_\vec{n} - x_{\vec{n} - \vec{e}_l} ) \right] = 0 \ .
\]
Changing the order of summation in this expression we get:
\[
\sum_{\vec{n} \in \mathbb{Z}^d} \sum_{k=1}^d \sum_{l=1}^d
b_{\vec{n},kl}  \cc{\left( x_\vec{n} - x_{\vec{n} - \vec{e}_k}
\right)} \left( x_\vec{n} - x_{\vec{n} - \vec{e}_l} \right) = 0 \
.
\]
Due to~\eqref{eq:bbnd}, this implies $x_\vec{n} = \const$.
However, we consider solutions which are not necessarily
square-summable, and the above argument does not work.

\begin{proof}
Consider the following auxiliary quantity $v: \mathbb{Z}^d \otimes
[1,d] \mapsto \mathbb{C}$, defined as
\begin{equation}
\label{eq:vdef}
v_{\vec{n},k} \defined x_\vec{n} - x_{\vec{n} - \vec{e}_k} \ .
\end{equation}

From~\eqref{eq:sumx2} and~\eqref{eq:vdef},  we have that for each
$k$, $(v_{\vec{n},k})$ is a square-summable sequence,
\begin{equation}
\label{eq:vsumsq}
\sum_{\vec{n} \in \mathbb{Z}^d} \abs{v_{\vec{n},k}}^2 < \infty \ .
\end{equation}

The fact that $v$ is square-summable allows us to define another auxiliary quantity $\tilde{v}: [1,d] \otimes [-\pi,\pi]^d \mapsto \mathbb{C}$ as the Fourier transform of $v$,
\begin{equation}
\label{eq:vtdef}
\tilde{v}_k(\vec{s}) \defined \frac{1}{(2\pi)^{d/2}} \sum_{\vec{n} \in \mathbb{Z}^d} v_{\vec{n},k} e^{- i \vec{n} \cdot \vec{s}} \ , \qquad \vec{s} \in [-\pi,\pi]^d \ .
\end{equation}
Additionally,~\eqref{eq:vsumsq} leads to $\tilde{v}_k$ being square-integrable,
\[
\int_{[-\pi,\pi]^d} \abs{\tilde{v}_k(\vec{s})}^2 \mathrm{d}^d \vec{s} < \infty \ .
\]
Due to~\eqref{eq:vdef}, we have for each pair $1 \le k_1,k_2 \le d$ and each $\vec{n} \in \mathbb{Z}^d$
\[
v_{\vec{n},k_1} - v_{\vec{n} - \vec{e}_{k_2}, k_1} = v_{\vec{n},k_2} - v_{\vec{n} - \vec{e}_{k_1}, k_2} \ .
\]
The Fourier transform of this equation goes as follows:
\[
\tilde{v}_{k_1}(\vec{s}) \left(1 - e^{-is_{k_2}} \right) = \tilde{v}_{k_2}(\vec{s}) \left(1 - e^{-is_{k_1}} \right) \ .
\]
Therefore, the following equality is valid for $s_{k_2} \neq 0$: $\tilde{v}_{k_1}(\vec{s}) = \tilde{v}_{k_2}(\vec{s})$ $\left(1 - e^{-is_{k_1}} \right)$ $\left(1 - e^{-is_{k_2}} \right)^{-1}$. The set $\{ \vec{s} \in [-\pi,\pi]^d: s_{k_2} = 0 \}$ has measure zero (in the measure $\prod_{j=1}^d \mathrm{d} s_j = \mathrm{d}^d \vec{s}$). We rewrite the equality as
\begin{equation}
\label{eq:trrule}
\tilde{v}_{k_1}(\vec{s}) \normeq \tilde{v}_{k_2}(\vec{s}) \frac{1 - e^{-is_{k_1}}}{1 - e^{-is_{k_2}}} \ ,
\end{equation}
where $\normeq$ means `equal everywhere in $[-\pi,\pi]^d$ except for a set with measure zero'.

It is be convenient for us to introduce another pair of auxiliary quantities. Let us define $y: \mathbb{Z}^d \otimes [1, d]^2 \mapsto \mathbb{C}$ as
\begin{equation}
\label{eq:ydef}
y_{\vec{n},kl} \defined b_{\vec{n},kl} v_{\vec{n},l} \ .
\end{equation}
Due to the bounds~\eqref{eq:bbnd} on $b_\vec{n}$ and the fact that it is a Hermitian matrix, we have $\abs{b_{\vec{n},kl}} \le b_2$. Inserting this into $\sum_{\vec{n} \in \mathbb{Z}^d} \abs{y_{\vec{n},kl}}^2$, we get
\begin{equation}
\label{eq:ysumsq}
\sum_{\vec{n} \in \mathbb{Z}^d} \abs{y_{\vec{n},kl}}^2 \le b_2 \sum_{\vec{n} \in \mathbb{Z}^d} \abs{v_{\vec{n},k}}^2 < \infty
\end{equation}
for each $1 \le k,l \le d$. Because of~\eqref{eq:ysumsq}, we can define another quantity, $\tilde{y}: [1, d]^2 \otimes [-\pi,\pi]^d \mapsto \mathbb{C}$, as the Fourier transform of $y_{\vec{n},kl}$,
\begin{equation}
\label{eq:ytdef}
\tilde{y}_{kl}(\vec{s}) \defined \frac{1}{(2\pi)^{d/2}} \sum_{\vec{n} \in \mathbb{Z}^d} y_{\vec{n},kl} e^{-i \vec{n} \cdot \vec{s}} \ ,
\end{equation}
$\tilde{y}_{kl}$ is a square-integrable function on the domain $[-\pi,\pi]^d$,
\[
\int_{[-\pi,\pi]^d} \abs{\tilde{y}_{kl}(\vec{s})}^2 \mathrm{d}^d \vec{s} < \infty \ .
\]

With the help of~\eqref{eq:vdef} and~\eqref{eq:ydef}, we write~\eqref{eq:Peq1} as
\begin{equation}
\label{eq:Peq2}
\sum_{k=1}^d \sum_{l=1}^d \left( y_{\vec{n},kl} - y_{\vec{n} + \vec{e}_k,kl} \right) = 0 \ .
\end{equation}
Using the definition of $\tilde{y}$, we can calculate the Fourier transform of~\eqref{eq:Peq2},
\[
\sum_{k=1}^d \sum_{l=1}^d \tilde{y}_{kl}(\vec{s}) \left( 1 - e^{is_k} \right) = 0 \ .
\]
Multiplying both sides by $\cc{\tilde{v}_1(\vec{s})} (1 - e^{is_1})^{-1}$ and using~\eqref{eq:trrule}, we obtain
\[
\sum_{k=1}^d \left( \sum_{l=1}^d \tilde{y}_{kl}(\vec{s}) \right) \cc{\tilde{v}_k(\vec{s})} \normeq 0 \ .
\]
Recalling the definition of $\normeq$, we can integrate this formula over $\vec{s}$, obtaining
\[
\sum_{k=1}^d \int_{[-\pi,\pi]^d} \cc{\tilde{v}_k(\vec{s})} \left( \sum_{l=1}^d \tilde{y}_{kl}(\vec{s}) \right) \mathrm{d}^d \vec{s} = 0 \ .
\]
Since the Fourier transform preserves the $L^2$ scalar product, we have from~\eqref{eq:vtdef},~\eqref{eq:ydef} and~\eqref{eq:ytdef}
\begin{equation}
\label{eq:zero1}
\sum_{k=1}^d \sum_{\vec{n} \in \mathbb{Z}^d} \cc{v_{\vec{n},k}} \left( \sum_{l=1}^d b_{\vec{n},kl} v_{\vec{n},l} \right) = 0 \ .
\end{equation}
By~\eqref{eq:bbnd} and~\eqref{eq:vsumsq}, we have for each $1 \le k \le d$
\[
\sum_{\vec{n} \in \mathbb{Z}^d} \abs{\sum_{l=1}^d b_{\vec{n},kl} v_{\vec{n},l}}^2 < \infty \ .
\]
The last result, together with~\eqref{eq:vsumsq} and the Schwartz inequality, ensures the convergence of the series
\[
\sum_{\vec{n} \in \mathbb{Z}^d} \cc{v_{\vec{n},k}} \left( \sum_{l=1}^d b_{\vec{n},kl} v_{\vec{n},l} \right)
\]
for each $1 \le k \le d$. Thus, we may transform~\eqref{eq:zero1} into
\begin{equation}
\label{eq:zero2}
\sum_{\vec{n} \in \mathbb{Z}^d} \sum_{k=1}^d \sum_{l=1}^d b_{\vec{n},kl} \cc{v_{\vec{n},k}}  v_{\vec{n},l}  = 0 \ .
\end{equation}
From~\eqref{eq:bbnd} we know that for each $\vec{n} \in \mathbb{Z}^d$, we have
\[
\sum_{k=1}^d \sum_{l=1}^d b_{\vec{n},kl} \cc{v_{\vec{n},k}}  v_{\vec{n},l} \ge b_1 \sum_{k=1}^d \abs{v_{\vec{n},k}}^2 \ .
\]
Since $b_1 > 0$, equation~\eqref{eq:zero2} may be true if and only if $v_{\vec{n},k} = 0$ for all $\vec{n} \in \mathbb{Z}^d$ and all $1 \le k \le d$. From~\eqref{eq:vdef}, we get $x_\vec{n} - x_{\vec{n} - \vec{e}_k} = 0$, which means that Theorem~\ref{thm:homconst} is true,
\[
x_\vec{n} = \const \ .
\]
\end{proof}

Using the above result, we prove the main theorem:
\begin{theorem}
\label{thm:uniq}
Let $f,f': \mathbb{Z}^d \mapsto \mathbb{C}$ be solutions of equation~\eqref{eq:Peq}, which fulfill condition~\eqref{eq:condition}. Then
\[
f_\vec{n} = f'_\vec{n} + \const \ ,
\]
which means that solutions of~\eqref{eq:Peq} which fulfill condition~\eqref{eq:condition} are unique up to a constant.
\end{theorem}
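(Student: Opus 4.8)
The plan is to reduce Theorem~\ref{thm:uniq} directly to Theorem~\ref{thm:homconst}, which the excerpt has already established. The key observation is that equation~\eqref{eq:Peq} is linear in $f$, so differences of solutions solve the corresponding homogeneous equation. Concretely, given two solutions $f$ and $f'$ of~\eqref{eq:Peq}, both fulfilling~\eqref{eq:condition}, I would set $x_\vec{n} \defined f_\vec{n} - f'_\vec{n}$, exactly as anticipated in the Introduction.

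First I would verify that $x$ satisfies condition~\eqref{eq:condition}. This is where a small amount of care is needed: we know $\sum_\vec{n} \abs{f_\vec{n} - f_{\vec{n}-\vec{e}_k}}^2 < \infty$ and $\sum_\vec{n} \abs{f'_\vec{n} - f'_{\vec{n}-\vec{e}_k}}^2 < \infty$, and we want the same for $x$. Writing $x_\vec{n} - x_{\vec{n}-\vec{e}_k} = (f_\vec{n} - f_{\vec{n}-\vec{e}_k}) - (f'_\vec{n} - f'_{\vec{n}-\vec{e}_k})$, this follows from the triangle inequality in $\ell^2$ (or from the elementary bound $\abs{a-b}^2 \le 2\abs{a}^2 + 2\abs{b}^2$ applied termwise). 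So $x$ fulfills~\eqref{eq:sumx2}.

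Next I would check that $x$ solves the homogeneous equation~\eqref{eq:Peq1}. Subtracting the instance of~\eqref{eq:Peq} for $f'$ from the instance for $f$, the right-hand sides $g_\vec{n}$ cancel, and since the left-hand side of~\eqref{eq:Peq} is linear in the unknown sequence, we obtain precisely~\eqref{eq:Peq1} with $x$ in place of the unknown. Now Theorem~\ref{thm:homconst} applies and yields $x_\vec{n} = \const$, i.e. $f_\vec{n} = f'_\vec{n} + \const$, which is the assertion.

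I do not expect any genuine obstacle here — the proof is essentially a two-line reduction, with the only mild subtlety being the $\ell^2$ triangle-inequality step needed to transfer condition~\eqref{eq:condition} from $f, f'$ to their difference. Everything else is immediate from the linearity of~\eqref{eq:Peq} and the already-proved Theorem~\ref{thm:homconst}.
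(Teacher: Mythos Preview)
Your proposal is correct and follows essentially the same approach as the paper: define $x \defined f - f'$, observe that it solves the homogeneous equation~\eqref{eq:Peq1} and satisfies~\eqref{eq:sumx2}, then invoke Theorem~\ref{thm:homconst}. If anything, you are more careful than the paper, which simply asserts that $x$ fulfills~\eqref{eq:sumx2} without spelling out the $\ell^2$ triangle-inequality step.
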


\begin{proof}
The difference $x \defined f - f'$ is a solution of equation~\eqref{eq:Peq1} and fulfills condition~\eqref{eq:sumx2}. Therefore, Theorem~\ref{thm:homconst} applies and we have
\[
x_\vec{n} = \const \ .
\]
Thus,
\[
f_\vec{n} = f'_\vec{n} + \const \ .
\]
This ends the proof.
\end{proof}

\section{Necessary condition for the existence of solutions}

\begin{theorem}
If $f$, fulfilling condition~\eqref{eq:condition}, is the solution of~\eqref{eq:Peq}, then its right-hand side $g$ must be square-summable.
\end{theorem}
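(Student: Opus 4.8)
The plan is to reduce the claim to the square-summability of the auxiliary sequences $y_{\vec{n},kl}$ already introduced in the proof of Theorem~\ref{thm:homconst}. First I would rewrite the left-hand side of~\eqref{eq:Peq} in terms of $v_{\vec{n},k} \defined f_\vec{n} - f_{\vec{n}-\vec{e}_k}$ and $y_{\vec{n},kl} \defined b_{\vec{n},kl} v_{\vec{n},l}$, exactly as in the passage from~\eqref{eq:Peq1} to~\eqref{eq:Peq2}, so that~\eqref{eq:Peq} takes the form
\[
g_\vec{n} = \sum_{k=1}^d \sum_{l=1}^d \left( y_{\vec{n}+\vec{e}_k,kl} - y_{\vec{n},kl} \right) \ .
\]
Note that each $g_\vec{n}$ is a finite sum, hence well defined, even though $f$ itself need not be square-summable; it is essential that only the discrete derivatives $v_{\vec{n},k}$ enter, and these are square-summable by~\eqref{eq:condition}.

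Next I would recall that, by the bound $\abs{b_{\vec{n},kl}} \le b_2$ from~\eqref{eq:bbnd} together with~\eqref{eq:condition}, each sequence $(y_{\vec{n},kl})_{\vec{n}\in\mathbb{Z}^d}$ is square-summable --- this is precisely estimate~\eqref{eq:ysumsq}. Moreover the counting measure on $\mathbb{Z}^d$ is invariant under the lattice translation $\vec{n} \mapsto \vec{n}+\vec{e}_k$, so the shifted sequence $(y_{\vec{n}+\vec{e}_k,kl})_\vec{n}$ has the same (finite) square-sum as $(y_{\vec{n},kl})_\vec{n}$.

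The conclusion then follows from the triangle inequality for the $\ell^2$-norm applied to the finite sum above: $g$ is a combination of $2d^2$ sequences, each of which is (up to sign) a translate of some $y_{\cdot,kl}$, whence
\[
\left( \sum_{\vec{n}\in\mathbb{Z}^d} \abs{g_\vec{n}}^2 \right)^{1/2} \le 2 \sum_{k=1}^d \sum_{l=1}^d \left( \sum_{\vec{n}\in\mathbb{Z}^d} \abs{y_{\vec{n},kl}}^2 \right)^{1/2} < \infty \ ,
\]
so $g$ is square-summable. I do not expect a genuine obstacle here: the only point requiring a little care is to express everything through the discrete derivatives $v_{\vec{n},k}$ rather than through $f$, since $f$ is not assumed to lie in $\ell^2$; once this is done the statement is a routine application of Minkowski's inequality and translation invariance. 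One could alternatively argue on the Fourier side, writing $\tilde{g}(\vec{s}) = \sum_{k,l}\left(e^{is_k}-1\right)\tilde{y}_{kl}(\vec{s})$ and invoking Parseval's identity, but this detour is not needed.
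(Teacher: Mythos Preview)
Your argument is correct and rests on the same two ingredients as the paper's proof --- the entrywise bound $\abs{b_{\vec{n},kl}} \le b_2$ coming from~\eqref{eq:bbnd} and the square-summability of the discrete derivatives from~\eqref{eq:condition} --- but it organizes them more economically. The paper expands $\abs{g_\vec{n}}^2$ directly into four quadruple sums over $k,l,k',l'$ (equation~\eqref{eq:normg}) and then bounds each cross-term separately via $\abs{a}\abs{b} \le \tfrac12(\abs{a}^2+\abs{b}^2)$, arriving at the explicit estimate~\eqref{eq:normg2}. You instead reuse the representation $g_\vec{n} = \sum_{k,l}(y_{\vec{n}+\vec{e}_k,kl}-y_{\vec{n},kl})$ already implicit in~\eqref{eq:Peq2} and apply Minkowski's inequality once, sidestepping the expansion entirely. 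Your route is shorter and ties in nicely with the machinery from Theorem~\ref{thm:homconst}; the paper's route is more pedestrian but produces an explicit constant in terms of $b_2$ and $d$ without further work.
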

\begin{proof}
Indeed, from~\eqref{eq:Peq} we have
\begin{equation}
\label{eq:normg}
\begin{split}
\sum_{\vec{n} \in \mathbb{Z}^d} \abs{g_\vec{n}}^2 &= \sum_{\vec{n} \in \mathbb{Z}^d} \sum_{k,l=1}^d \sum_{k',l'=1}^d \cc{b_{\vec{n},kl}} b_{\vec{n},k'l'} \cc{(f_\vec{n} - f_{\vec{n} - \vec{e}_l} )} (f_\vec{n} - f_{\vec{n} - \vec{e}_{l'}}) \\
&\quad - \sum_{\vec{n} \in \mathbb{Z}^d} \sum_{k,l=1}^d \sum_{k',l'=1}^d \cc{b_{\vec{n},kl}} b_{\vec{n} + \vec{e}_{k'},k'l'} \cc{(f_\vec{n} - f_{\vec{n} - \vec{e}_l} )} ( f_{\vec{n} + \vec{e}_{k'}} - f_{\vec{n} + \vec{e}_{k'} - \vec{e}_{l'}} ) \\
&\quad + \sum_{\vec{n} \in \mathbb{Z}^d} \sum_{k,l=1}^d \sum_{k',l'=1}^d \cc{b_{\vec{n} + \vec{e}_k,kl}} b_{\vec{n} + \vec{e}_{k'},k'l'} \cc{( f_{\vec{n} + \vec{e}_{k}} - f_{\vec{n} + \vec{e}_{k} - \vec{e}_{l}} )} \\
&\quad\quad\times ( f_{\vec{n} + \vec{e}_{k'}} - f_{\vec{n} + \vec{e}_{k'} - \vec{e}_{l'}} ) \\
&\quad - \sum_{\vec{n} \in \mathbb{Z}^d} \sum_{k,l=1}^d \sum_{k',l'=1}^d \cc{b_{\vec{n} + \vec{e}_k,kl}} b_{\vec{n},k'l'} \cc{( f_{\vec{n} + \vec{e}_{k}} - f_{\vec{n} + \vec{e}_{k} - \vec{e}_{l}} )} (f_\vec{n} - f_{\vec{n} - \vec{e}_{l'}}) \\
\end{split}
\end{equation}
Each of these terms is bounded, for example:
\begin{multline}
\abs{\sum_{k,l=1}^d \sum_{k',l'=1}^d \cc{b_{\vec{n},kl}} b_{\vec{n},k'l'} \cc{(f_\vec{n} - f_{\vec{n} - \vec{e}_l} )} (f_\vec{n} - f_{\vec{n} - \vec{e}_{l'}})} \le\\\le
\sum_{k,l=1}^d \sum_{k',l'=1}^d \abs{b_{\vec{n},kl}} \abs{b_{\vec{n},k'l'}} \abs{f_\vec{n} - f_{\vec{n} - \vec{e}_l}} \abs{f_\vec{n} - f_{\vec{n} - \vec{e}_{l'}}} \le \\ \le
b_2^2 \sum_{k,l=1}^d \sum_{k',l'=1}^d \abs{f_\vec{n} - f_{\vec{n} - \vec{e}_l}} \abs{f_\vec{n} - f_{\vec{n} - \vec{e}_{l'}}} \ ,
\end{multline}
since for Hermitian $b_\vec{n}$, we have $\abs{b_{\vec{n},kl}} \le \norm{b_\vec{n}}$. For each product of the type $\abs{f_\vec{n} - f_{\vec{n} - \vec{e}_l}} \abs{f_\vec{n} - f_{\vec{n} - \vec{e}_{l'}}}$, we have
\begin{equation}
\sum_{\vec{n} \in \mathbb{Z}^d} \abs{f_\vec{n} - f_{\vec{n} - \vec{e}_l}} \abs{f_\vec{n} - f_{\vec{n} - \vec{e}_{l'}}} \le 
\sum_{k=1}^d \sum_{\vec{n} \in \mathbb{Z}^d} \abs{f_\vec{n} - f_{\vec{n} - \vec{e}_k}}^2 < \infty
\end{equation}
and analogously for other terms in~\eqref{eq:normg}. Using these results, we obtain
\begin{equation}
\label{eq:normg2}
\sum_{\vec{n} \in \mathbb{Z}^d} \abs{g_\vec{n}}^2 \le b_2^2 d^4 \sum_{k=1}^d \sum_{\vec{n} \in \mathbb{Z}^d} \abs{f_\vec{n} - f_{\vec{n} - \vec{e}_k}}^2 < \infty \ .
\end{equation}
Hence, $g$ is square-summable.
\end{proof}

\section{Summary}

We introduced the generalized discrete Poisson equation in $d$ dimensions. With the use of the Fourier transform, we proved the uniqueness up to a constant of the solutions with square-summable discrete derivatives. We also provide the necessary condition for the existence of solution. Because of the ubiquity of discrete equations and the Poisson equation in particular, this result is important for many areas of physics and mathematics.

\section{Acknowledgments}

The author expresses gratitude to prof. Jerzy Kijowski for inspiring this work and useful comments about the paper and to dr Andrzej Wakulicz for fruitful discussion.

\bibliography{uniqsol}
\bibliographystyle{amsplain}

\end{document}